\newtheorem{theorem}{Theorem}
\theoremstyle{definition}
\DeclarePairedDelimiter\abs{\lvert}{\rvert}%
\DeclarePairedDelimiter\norm{\lVert}{\rVert}%
\let\oldabs\abs
\def\abs{\@ifstar{\oldabs}{\oldabs*}}
\let\oldnorm\norm
\def\norm{\@ifstar{\oldnorm}{\oldnorm*}}
\title{DLITE: The Discounted Least Information Theory of Entropy}
\author{Weimao Ke}
\affil{wk@drexel.edu\\Drexel University}
\begin{document}
\maketitle

We propose an entropy-based information measure, namely the Discounted Least Information Theory of Entropy (DLITE), which not only exhibits important characteristics expected as an information measure but also satisfies conditions of a metric. Classic information measures such as Shannon Entropy, KL Divergence, and Jessen-Shannon Divergence have manifested some of these properties while missing others. This work fills an important gap in the advancement of information theory and its application, where related properties are desirable.

\section{Formulation}

\subsection{Least Information Theory (LIT)}

In our prior work, we proposed the Least Information Theory (LIT) to quantify the amount of entropic difference between two probability distributions \cite{Ke:2015}. Given probability distributions $P$ and $Q$ of the same variable $X$, LIT is computed by:

\begin{eqnarray}
  LIT(P, Q)
  & = & \sum_{x \in X} \int_{p_x}^{q_x} - \log p\ dp \\
  & = & \sum_{x \in X} \Big\lvert p_x (1-\ln p_x) - q_x (1-\ln q_x) \Big\rvert
\end{eqnarray}

where $x$ is one of the mutually exclusive inferences of $X$, and $p_x$ and $q_x$ are probabilities of $x$ on the $P$ and $Q$ distributions respectively.

For any probabilities $p$ and $q$, let:

\begin{eqnarray}
lit(p, q) & = & \Big\lvert p (1-\ln p) - q (1-\ln q) \Big\rvert
\end{eqnarray}

LIT can be written as:

\begin{eqnarray}
  LIT(P, Q)
  & = & \sum_{x \in X} lit(p_x, q_x)
\end{eqnarray}

which is a function of the natural logarithm. This is the result of the integral of any logarithm. Research has studied the Least Information Theory (LIT) and shown superior results in applications such as text clustering/classification and information retrieval \cite{Ke:2015,Ke:2017,Du:2018}.

\subsection{Entropy Discount}

We define the following entropy discount:

\begin{eqnarray}
  \Delta_H(P, Q)
  & = & \sum_{x \in X} \Big\lvert p_x - q_x \Big\rvert \frac{\int_{p_x}^{q_x} - p \log{p}\ dp}{\int_{p_x}^{q_x} x\ d x} \\
  & = & \sum_{x \in X} \frac{\Big\lvert p_x^2 (1 - 2\ln{p_x}) - q_x^2 (1-2\ln{q_x}) \Big\rvert}{2(p_x + q_x)}
\end{eqnarray}

For any probabilities $p$ and $q$, let:

\begin{eqnarray}
\delta_h(p, q) & = & \frac{\Big\lvert p^2 (1 - 2\ln{p}) - q^2 (1-2\ln{q}) \Big\rvert}{2(p + q)}
\end{eqnarray}

The entropy discount $\Delta_H$ can be written as:

\begin{eqnarray}
  \Delta_H(P, Q) & = & \sum_{x \in X} \delta_h(p_x, q_x)
\end{eqnarray}

\subsection{DLITE: LIT with Entropy Discount}

We now define the Discounted Least Information Theory of Entropy (DLITE, pronounced as {\it delight}) as the amount of least information $LIT$ subtracted by its entropy discount $\Delta_H$:

\begin{eqnarray}
DL(P, Q)
& = & LIT(P, Q) - \Delta_H(P, Q) \\
& = & \sum_{x \in X} lit(p_x, q_x) - \delta_h(p_x, q_x)
\label{eq:dl}
\end{eqnarray}

%

For any probability change from $p$ to $q$, let:

\begin{eqnarray}
dl(p, q) & = & lit(p, q) - \delta_h(p, q)
\label{eq:dl_x}
\end{eqnarray}

Equation~\ref{eq:dl} can written as:

\begin{eqnarray}
DL(P, Q) & = & \sum_{x \in X} dl(p_x, q_x)
\label{eq:dl_sum}
\end{eqnarray}

\section{DLITE and Properties}

Again, DLITE is the amount of Least Information (LIT) with the $\Delta_H$ discount:

\begin{eqnarray}
  DL(P, Q)
  & = & LIT(P, Q) - \Delta_H(P, Q) \\
  & = &  \sum_{x \in X} \int_{p_x}^{q_x} \log{\frac{1}{p}}\ dp - \sum_{x \in X} \Big\lvert p_x - q_x \Big\rvert  \frac{\int_{p_x}^{q_x} p \log{\frac{1}{p}}\ dp}{\int_{p_x}^{q_x} p\ d p}
\label{eq:dl2}
\end{eqnarray}

Whereas LIT represents the sum of weighted, microscopic entropy changes, it consists of an amount of entropy change due to the scale of related probabilities. This has led to the undesirable consequence of having different LIT amounts in different sub-system breakdowns.

The entropy discount $\Delta_H$ accounts for this unnecessary, extra amount in the LIT and reduces it to a scale-free measure. As shown in Equation~\ref{eq:dl2}, the discount on each $x$ dimension is a product of the absolute probability change in $p$ and the mean of $\log{\frac{1}{p}}$, which is subject to the scale of $p$ values.

\subsection{Metric Properties}

\begin{figure}[htb]
  \centering
  \includegraphics[width=2.5in]{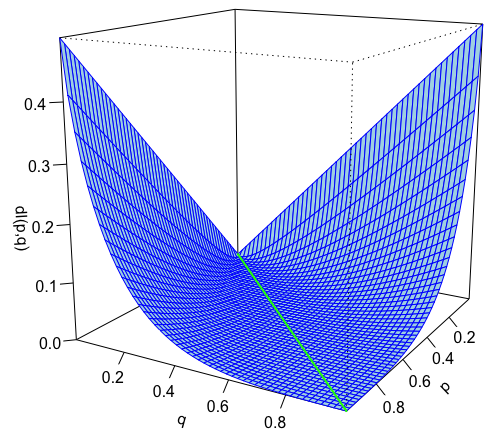}
  \caption{$dl(p,q)$ for any $p$ and $q$ values}
  \label{fig:pq}
\end{figure}

Given the definition in Equation~\ref{eq:dl} or \ref{eq:dl2}, it can be shown that DLITE satisfies the following metric properties:

\begin{enumerate}
  \item Non-negativity: $DL(P,Q) \ge 0$ for any probability distributions $P$ and $Q$ of the same dimensionality. See Appendix for proof.
  \item Identity of Indiscernibles: $DL(P,Q) = 0$ if and only if $P$ and $Q$ are identical distributions.
  \item Symmetry: $DL(P,Q) == DL(Q, P)$, the amount of the information from $P$ to $Q$ is the same as that from $Q$ to $P$.
\end{enumerate}

Figure~\ref{fig:pq} plots the value of dlite, $dl(p,q)$, for any probability change from $p$ to $q$ and demonstrates the above three properties: (1) all values $\ge 0$, (2) $0$ values only on the diagonal line where $p=q$, and (3) the symmetry indicating $dl(p,q)=dl(q,p)$.

While DLITE does not satisfy triangular inequality, its cube root $DLITE^{\frac{1}{3}}$ does:

\begin{eqnarray}
  \sqrt[3]{DL(P,Q)} + \sqrt[3]{DL(Q,R)} \ge \sqrt[3]{DL(P,R)}
\end{eqnarray}

where $P$, $Q$, and $R$ are probability distributions of the same dimensionality.

Given the above properties of DLITE, it is straightforward to show that $DLITE^{\frac{1}{3}}$ also satisfies non-negativity, identity of indiscernibles, and symmetry, and is, therefore, a metric. Because its cube root is a metric distance, DLITE can be regarded as a 3-dimensional {\it volumetric} measure in the amount of information. We refer to $DLITE^{\frac{1}{3}}$ as the {\it DLITE distance}.

This characteristic is similar to that of Jessen-Shannon (JS) Divergence, of which the square root is a metric \cite{Lin:1991,Endres:2006}. DLITE shares similar patterns with JS divergence in the measured amount of information.

\begin{figure}[htb]
  \begin{tabular}{cc}
  \includegraphics[width=2.3in]{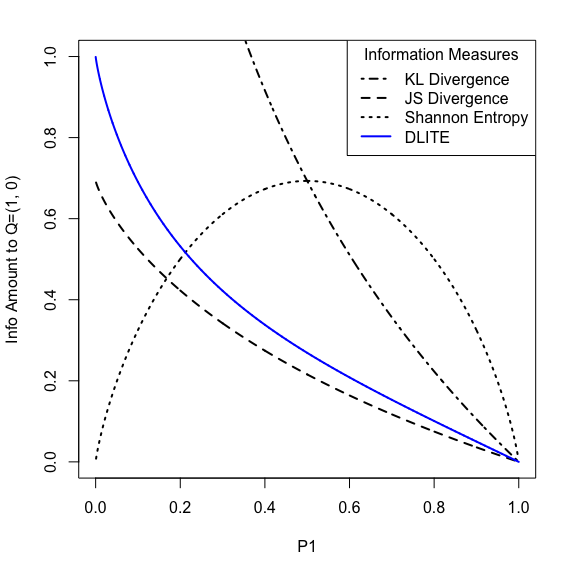} &
  \includegraphics[width=2.3in]{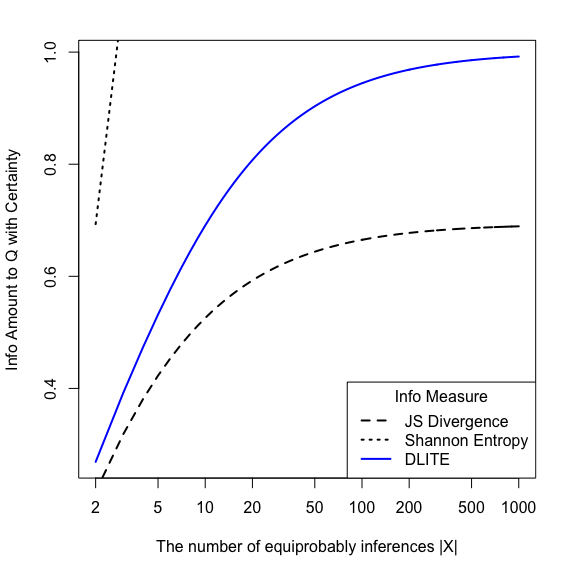} \\
  (a) $P(p_1,p_2) \to Q(1,0)$ &
  (b) Equiprobable $P$ to certainty $Q$
  \end{tabular}
\caption[DLITE vs. classic information measures on reducing to certainty]{DLITE vs. classic information measures on reducing to certainty. $Y$ is the amount of information $I(P,Q)$ based on each information measure. (a) is the binary case, where $X$ denotes probability $p_1$ of two mutually exclusive inferences, with $p_2 = 1 - p_1$. (b) shows the general case of reducing an equiprobable distribution $P$ to certainty $Q$, where log-transformed $X$ denotes the number of equiprobable inferences. }
\label{fig:certainty}
\end{figure}

Figure~\ref{fig:certainty} compares DLITE with classic measures including Shannon Entropy\cite{Shannon:1948}, KL divergence\cite{Kullback:1959}, and JS divergence\cite{Lin:1991} on reducing a probability distribution to certainty (when one inference becomes the ultimate outcome). Figure~\ref{fig:certainty} (a) compares the measures of reducing a binary probability distribution $P(p_1, p_2)$ to certainty $Q(1,0)$, i.e. with the first inferences as the ultimate outcome. Shannon entropy is symmetric here because it only accounts for the overall entropy reduction and disregards the amount of probability change in specific inferences. DLITE and Jessen-Shannon divergence follows a similar pattern with a bound whereas the KL divergence is unbounded, $KL \to \infty$ with $p_1 \to 0$ becomes the outcome.

In Figure~\ref{fig:certainty} (b), we compare the information measures when reducing equiprobable inferences to certainty. With an increasing number of equiprobable inferences, Shannon entropy continues to increase whereas DLITE and JS divergence are bounded. DLITE approaches $1$ when a large number of equiprobable inferences are reduced to certainty.

\subsection{Properties as an Information Quantity}

\begin{figure}[htb]
  \begin{tabular}{cc}
  \includegraphics[width=2.3in]{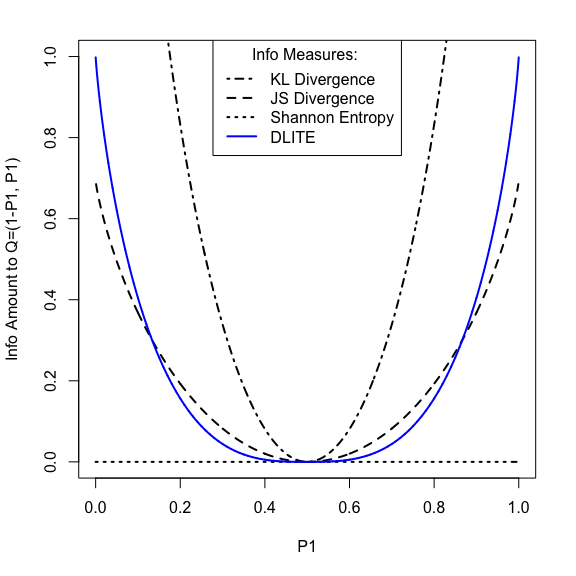} &
  \includegraphics[width=2.3in]{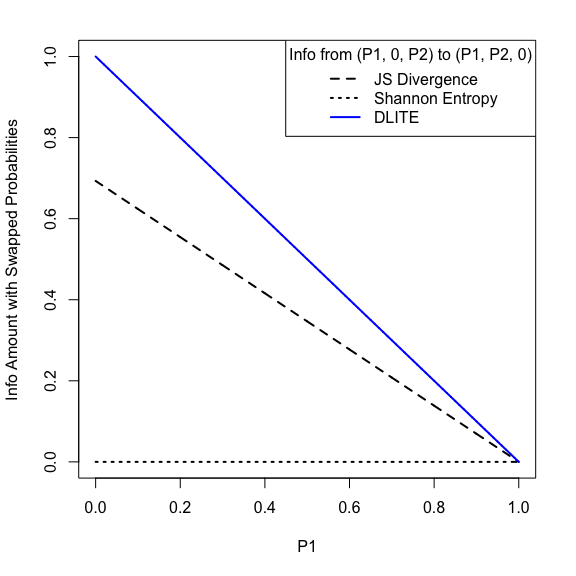} \\
  (a) Binary swap $P(p_1,p_2) \to Q(p_2,p_1)$ &
  (a) Swap $2$ of $3$ probabilities
  \end{tabular}
\caption[Information for Swapped Probabilities]{Information for Swapped Probabilities. $X$ axis denotes the probability of one inference whereas $Y$ shows the amount of information, with (a) swapped probabilities in the binary case and (b) swapped probabilities of $3$ inferences. Compare to Fig. 1 in \cite{Lin:1991}. }
\label{fig:swap}
\end{figure}

\subsubsection{Greater DLITE for More Equiprobable to Certainty}


With $|X|$ equiprobable inferences $p_x = \frac{1}{|X|}$, the amount of DLITE required to reduce the distribution to certainty (e.g. with one inference being the ultimate outcome $q_1=1$) increases with a growing number of inferences $|X|$ and asymptotically approaches $1$ with an infinite number of inferences:

\begin{eqnarray}
DL_{|X| \to \infty}\Big(P\big(p_x=\frac{1}{|X|}\big),Q_{certainty}\Big) = 1
\end{eqnarray}

Whereas KL divergence is unbounded, Shannon entropy always increases with a greater number of equiprobable inferences. As shown in Figure~\ref{fig:certainty} (b), both Jensen-Shannon divergence and DLITE are bounded on reducing equiprobable probabilities to certainty. 

\subsubsection{DLITE Maximum $\le 1$}


DLITE is bounded in $[0,1]$ regardless of the dimensionality. $DLITE$ on one single inference $x \in X$ is maximized, $dl(p_x,q_x)=0.5$ when the probability changes from $p_x=0$ to $q_x=1$, or from $p_x=1$ to $q_x=0$. With $2$ mutually exclusive inferences, the overall DLITE is maximized for changes from $P=(0,1)$ to $Q=(1,0)$, where $DL(P,Q)=1$.

Shannon entropy, on the other hand, always returns $0$ when probabilities are swapped, as shown by examples in Figure~\ref{fig:swap}. In Figure~\ref{fig:swap} (a), KL Divergence approaches infinity with a $0$ probability whereas DLITE and JS divergence are bounded by $1$ and $\ln{2}$ respectively. Likewise, as Figure~\ref{fig:swap} (b) shows, DLITE is bounded in $[0,1]$ with 2 out of 3 probabilities swapped.

\begin{figure}[hbt]
  \begin{tabular}{cc}
  \includegraphics[width=2.3in]{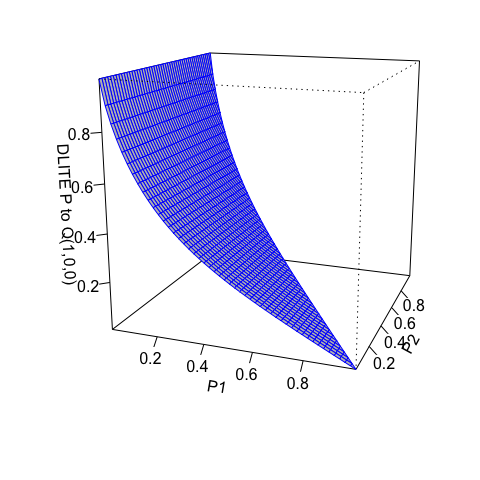} &
  \includegraphics[width=2.3in]{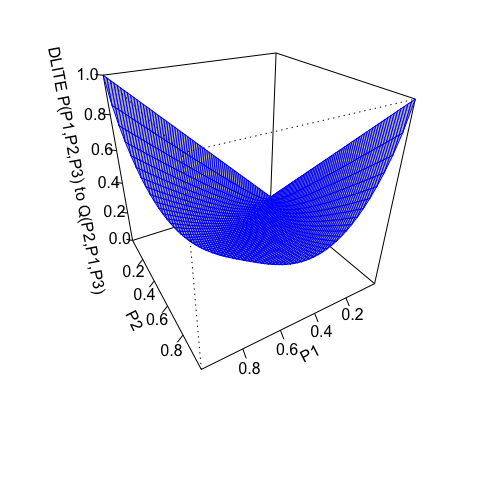} \\
  (a) $P(p_1,p_2,p_3) \to Q(1,0,0)$ &
  (b) $P(p_1,p_2,p_3) \to Q(p_2,p_1,p_3)$
  \end{tabular}
\caption[DLITE on $3$ mutually exclusive inferences]{DLITE on $3$ mutually exclusive inferences. $X$ and $Y$ denotes probabilities $p_1$ and $p_2$ of three mutually exclusive inferences, with $p_3 = 1 - p_1 - p_2$. $Z$ (vertical) is the DLITE quantity $DL(P,Q)$.}
\label{fig:3d}
\end{figure}

Figure~\ref{fig:3d} shows the function surface of DLITE on three inferences. Again, in all these cases, DLITE remains in the $[0,1]$ range. In Figure~\ref{fig:3d} (a), the probability distribution changes from $P$ (with the X coordinate for $p_1$ and Y for $p_2$) to $Q$, where the first inference is the outcome $p_1=1$. Figure~\ref{fig:3d} (b) shows the situations in which the probabilities are swapped, i.e. from $P(p_1,p_2,p_3)$ to $Q(p_2,p_1,p_3)$. It exhibits a symmetry and indicates that swapping the probabilities in the opposite direction results in the same amount of DLITE.

\subsubsection{Overall DLITE as Weighted Sum of Sub-systems}

%

Suppose each inference $x \in X$ can be broken down into a subsystem of mutually exclusive inferences $s \in x$, where the sum of their probabilities:

\begin{eqnarray}
  \sum_{x \in X} p_{x} & = & 1 \\
  \sum_{s \in x} p_{s} & = & 1
\end{eqnarray}

In the overall system $X_S$ of all sub-systems combined, the probability of each inference of $x_s \in x$ of $P$ distribution is:

\begin{eqnarray}
p_{x_s} & = & p_x p_s
\end{eqnarray}

And the sum of their probabilities:

\begin{eqnarray}
  \sum_{x \in X} \sum_{s \in x} p_x p_s & = & 1
\end{eqnarray}

Assume the $P$ distribution for $X$ remains unchanged, hence:

\begin{eqnarray}
DL(P_X, P_X) & = & 0
\end{eqnarray}

Suppose the sub-system distributions change from $P$ to $Q$, then the DLITE of the $x$ sub-system is:

\begin{eqnarray}
  DL(P_{x}, Q_{x})
  & = & \sum_{x_s \in x} dl(p_{x_s}, q_{x_s})
\end{eqnarray}

The overall DLITE for $X_S$ can be computed by:

\begin{eqnarray}
  DL(P_{X_S}, Q_{X_S})
  & = & \sum_{x \in X} \sum_{x_s \in x} dl(p_x p_{x_s}, p_x q_{x_s}) \\
  & = & \sum_{x \in X} p_x \sum_{x_s \in x} dl(p_{x_s}, q_{x_s}) \\
  & = & \sum_{x \in X} p_x DL(P_x, Q_x) \\
  & = & \sum_{x \in X} p_x DL(P_x, Q_x) + DL(P_X, P_X)
\end{eqnarray}

In other words, DLITE of the overall system $X_S$ can be computed by the weighted sum of DLITE amounts for $x \in X$ sub-systems. See Appendix for proof of $dl(xp, xq) = x \cdot dl(p,q)$, which leads to the sub-system breakdown rule here as well as the following properties of product and joint probability distributions.

\subsubsection{Independent X and Y}

For variables $X$ and $Y$ that are statistically independent, the joint probability of $x$ and $y$ can be computed by:

\begin{eqnarray}
p_{xy} & = & p_x p_y
\end{eqnarray}

Let $P_X P_Y$ be the joint probability distribution of the two distributions $P_X$ and $P_Y$. Assume the probability distribution of $X$ changes from $P_X$ to $Q_X$ and the distribution of $Y$ remains $P_Y$, it can be shown that:

\begin{eqnarray}
DL(P_X P_Y, Q_X P_Y)
& = & DL(P_X, Q_X) \\
& = & DL(P_X, Q_X) + \underbrace{DL(P_Y,P_Y)}_{\text{=0, no change}}
\end{eqnarray}

\subsubsection{Dependent X and Y}

For dependent variables $X$ and $Y$, the joint probability of $x$ and $y$ can be computed by:

\begin{eqnarray}
p_{xy} & = & p_x p_{y|x}
\end{eqnarray}

Let $P_{XY}$ be the joint probability distribution of the two distributions $P_X$ and $P_Y$. $Q_{XY}$ is the changed joint distribution.

If the probability distribution of $X$ changes from $P_X$ to $Q_X$ whereas the conditional distribution is unchanged with $P_{Y|X}$, it can be shown that:

\begin{eqnarray}
DL(P_{XY}, Q_{XY})
& = & DL(P_X,Q_X) \\
& = & DL(P_X,Q_X) + \underbrace{DL(P_{Y|X},P_{Y|X})}_{\text{=0, no change}}
\end{eqnarray}

If the probability distribution of $X$ remains unchanged at $P_X$ and the conditional distribution changes from $P_{Y|X}$ to $Q_{Y|X}$, it can be shown that:

\begin{eqnarray}
DL(P_{XY}, Q_{XY})
& = & DL(P_{Y|X}, Q_{Y|X}) \\
& = & DL(P_{Y|X},Q_{Y|X}) + \underbrace{DL(P_X,P_X)}_{\text{=0, no change}}
\end{eqnarray}

%
%

\section{Conclusion}

The proposed DLITE measure exhibits a set of very useful characteristics. It meets the metric properties of non-negativity, identity of indiscernibles, and symmetry. Additionally, its cube root $DLITE^{\frac{1}{3}}$ satisfies the property of triangular inequality and is a metric distance.

DLITE also manifests several other desirable properties of an information measure. Its value is bounded in $[0,1]$, increases with more equiprobable inferences reduced to a certainty, and can be computed as the weighted sum of DLITE in the sub-systems. DLITE is additive in cases of dependent and independent variables. These properties support the use of DLITE in applications where the amount of information is to be measured and aggregated properly.

\appendix
\section*{Appendix}


%

\begin{theorem}

%

For any probability distributions $P$ and $Q$:
\begin{eqnarray}
DL(P, Q) \ge 0
\end{eqnarray}

\end{theorem}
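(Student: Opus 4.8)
The plan is to reduce to a pointwise inequality and then invoke a Chebyshev-type correlation inequality. By Equation~\ref{eq:dl_sum} we have $DL(P,Q) = \sum_{x\in X} dl(p_x,q_x)$, so it suffices to prove $dl(p,q)\ge 0$ for every pair of probabilities $p,q\in[0,1]$. Both $lit$ and $\delta_h$ are symmetric in their two arguments (each is the absolute value of an expression that is antisymmetric under swapping $p$ and $q$), so without loss of generality I would assume $p\le q$ and write $a=p$, $b=q$, with the degenerate case $a=b$ (equivalently $p=q$) giving $dl=0$ immediately.

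The key step is to work from the integral representation in Equation~\ref{eq:dl2} rather than the closed forms. On the interval $[a,b]\subseteq[0,1]$ one checks, by the same antiderivatives that produce the closed forms, that
\begin{eqnarray}
lit(p,q) & = & \int_a^b \log\tfrac{1}{t}\, dt, \\
\delta_h(p,q) & = & (b-a)\,\frac{\int_a^b t\log\tfrac{1}{t}\, dt}{\int_a^b t\, dt}.
\end{eqnarray}
So $dl(p,q)\ge 0$ is precisely the assertion that the $t$-weighted average of $\log\tfrac1t$ over $[a,b]$ is no larger than its unweighted average. Since $t\mapsto\log\tfrac1t$ is nonincreasing on $(0,1]$ while $t\mapsto t$ is increasing, the two are oppositely ordered, and Chebyshev's integral inequality gives
\begin{eqnarray}
(b-a)\int_a^b t\log\tfrac{1}{t}\, dt \;\le\; \Big(\int_a^b \log\tfrac{1}{t}\, dt\Big)\Big(\int_a^b t\, dt\Big).
\end{eqnarray}
Dividing by $\int_a^b t\, dt>0$ yields $\delta_h(p,q)\le lit(p,q)$, hence $dl(p,q)\ge 0$; summing over $x\in X$ completes the proof. (Equality forces $a=b$, which incidentally also gives the identity-of-indiscernibles direction.)

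The routine parts are the two antiderivative computations and the verification that $\log\tfrac1t$ and $t$ have the claimed monotonicity on $(0,1]$. The main obstacle I anticipate is handling the boundary cases $p=0$ or $q=0$ carefully: there $\log\tfrac1t$ blows up at the endpoint, so I would confirm that the improper integrals $\int_0^b\log\tfrac1t\,dt=b(1-\ln b)$ and $\int_0^b t\log\tfrac1t\,dt$ converge (the latter integrand extends continuously by $0$ at $t=0$), that $\delta_h$ extends continuously with the convention $dl(0,0):=0$, and that Chebyshev's inequality still applies on the half-open interval — most safely by proving it on $[\varepsilon,b]$ and letting $\varepsilon\to 0^+$. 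If a self-contained argument avoiding any appeal to Chebyshev is preferred, an alternative is to fix $p$ and show by differentiating $q\mapsto dl(p,q)$ that it decreases for $q<p$ and increases for $q>p$, so that its minimum $dl(p,p)=0$ is attained only at $q=p$; this is more calculation but uses nothing beyond single-variable calculus.
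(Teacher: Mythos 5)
Your argument is correct, but it takes a genuinely different route from the paper's. The paper works directly with the closed form: for $x \ge c$ it sets $g(x,c) = x(1-\ln x) - c(1-\ln c) - \frac{x^2(1-2\ln x)-c^2(1-2\ln c)}{2(x+c)}$, computes $g'(x) = \frac{x^2 - c^2 - 2c^2\ln(x/c)}{2(x+c)^2}$, observes that the only critical point is $x=c$ where $g=0$, and concludes $g\ge 0$, then writes $DL$ as a sum of $g(p,q)$ over coordinates with $p\ge q$ and $g(q,p)$ over the rest. You instead return to the integral representations, identify $lit$ and $\delta_h$ as $(b-a)$ times the unweighted and the $t$-weighted averages of $\log\tfrac1t$ on $[a,b]$, and invoke Chebyshev's integral inequality for oppositely ordered functions. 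Your version is more conceptual: it explains \emph{why} the discount cannot exceed the LIT (the weight $t$ emphasizes the part of the interval where $\log\tfrac1t$ is small), it yields the equality case $a=b$ as a byproduct (the identity-of-indiscernibles direction), and it sidesteps a logical gap in the paper's write-up, which asserts that the unique critical point is a minimum without checking the sign of $g'$ away from $x=c$ --- the monotonicity argument you sketch as a fallback is essentially what is needed to make the paper's version airtight. The cost is the extra care you already flag at the boundary: the improper integral of $\log\tfrac1t$ at $t=0$ and the degenerate denominator at $p=q=0$ both need the limiting conventions you describe, after which the proof is complete.
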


\begin{proof}

The DLITE equation can be written as:

\begin{eqnarray*}
&   & DL(P \to Q) \nonumber\\
& = & LIT(P \to Q) - \delta_H(P \to Q) \nonumber\\
& = & \sum_{p_i \ge q_i}^{} p_i (1-\ln p_i) - q_i (1-\ln q_i)  - \frac{p_i^2 (1 - 2\ln{p_i}) - q_i^2 (1-2\ln{q_i})}{2(p_i + q_i)} \nonumber\\
& + & \sum_{p_i<q_i}^{} q_i (1-\ln q_i) - p_i (1-\ln p_i)  - \frac{ q_i^2 (1-2\ln{q_i}) - p_i^2 (1 - 2\ln{p_i})}{2(p_i + q_i)}
\end{eqnarray*}

For any values $c$ and $x \ge c$, let:

\begin{eqnarray*}
g(x, c) & := & x (1-\ln x) - c (1-\ln c)  - \frac{x^2 (1 - 2\ln{x}) - c^2 (1-2\ln{c})}{2(x + c)}
\end{eqnarray*}

DL can be rewritten as:

\begin{eqnarray}
DL(P, Q) & = & \sum_{p \ge q}^{} g(p, q) + \sum_{p < q}^{} g(q, p)
\label{eq:dl_g}
\end{eqnarray}

The derivative of $g(x,c)$ with regard to $x$ is:

\begin{eqnarray*}
g'(x) & = & -\dfrac{2c^2\ln\left(x\right)-x^2-2c^2\ln\left(a\right)+c^2}{2\left(x+c\right)^2}
\end{eqnarray*}

The minimum of $g(x)$ can be obtained at $g'(x)=0$:

\begin{eqnarray*}
&   & 2c^2\ln\left(x\right)-x^2-2c^2\ln\left(a\right)+c^2 \\
& = & x^2 - c^2(1+2\ln\frac{x}{c}) \\
& = & 0
\end{eqnarray*}

That is:

\begin{eqnarray*}
x^2 & = & c^2(1+2\ln\frac{x}{c}) \\
\ln\frac{x}{c} & = &  \frac{(\frac{x}{c})^2 - 1}{2}
\end{eqnarray*}

Let $r=\frac{x}{c}$, this becomes:

\begin{eqnarray*}
\ln r & = &  \frac{r^2 - 1}{2}
\end{eqnarray*}

Or:

\begin{eqnarray*}
r & = &  e^\frac{r^2 - 1}{2}
\end{eqnarray*}

The only solution is $r=\frac{x}{c}=1$, i.e. $x=c$. Hence the minimum of $g(x)$ is at $x=q$, where $g(x)=0$. Therefore, $g(x) \ge 0$, with the zero value at $x=c$. Based on Equation~\ref{eq:dl_g}, where DLITE is the sum of $g(p,q)$ and $g(q,p)$, we conclude that $DL(P, Q) \ge 0$.

\end{proof}


\begin{theorem}

With $|X|$ equiprobable inferences $p_x = \frac{1}{|X|}$, the amount of DLITE required to reduce the distribution to certainty (e.g. with one of the $p_x=1$) increases with the increase in the number of inferences.

\end{theorem}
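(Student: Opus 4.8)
The plan is to collapse the claim to a one-variable monotonicity statement. Write $n := |X|$, let $P$ be the equiprobable distribution $p_x = \tfrac1n$, and let $Q$ be the certainty distribution with $q_1 = 1$ and $q_x = 0$ otherwise. Since $DL(P,Q) = \sum_{x} dl(p_x,q_x)$, the sum has one term $dl(\tfrac1n,1)$ from the coordinate that becomes certain and $n-1$ equal terms $dl(\tfrac1n,0)$ from the coordinates that vanish, so
\[
DL_n := DL(P,Q) = dl\!\left(\tfrac1n,\,1\right) + (n-1)\,dl\!\left(\tfrac1n,\,0\right).
\]

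First I would evaluate the two pieces from the definitions of $lit$ and $\delta_h$, using the harmless limit $q\to 0$ (where $q\ln q,\, q^2\ln q\to 0$), or equivalently the scaling identity $dl(xp,xq)=x\,dl(p,q)$ with $x=\tfrac1n$. This gives at once $dl(\tfrac1n,0)=\tfrac1n\,dl(1,0)=\tfrac1{2n}$. For the other piece one expands $lit(\tfrac1n,1)$ and $\delta_h(\tfrac1n,1)$; this requires checking the signs inside the absolute values, namely $\tfrac1n(1+\ln n)<1$ and $\tfrac1{n^2}(1+2\ln n)<1$ for $n\ge 2$, which hold because $\tfrac{1+\ln n}{n}$ is decreasing and $n^2-1-2\ln n$ is increasing on $[1,\infty)$. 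After simplifying with $n^2-1=(n-1)(n+1)$, the $\tfrac{n-1}{2n}$ contributions cancel and I expect the closed form
\[
DL_n \;=\; 1 - \frac1n - \frac{\ln n}{n+1}.
\]
As sanity checks, $DL_1 = 0$ (the distribution is already certain) and $DL_n \to 1$ as $n\to\infty$, matching the bound claimed in the text.

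It then remains to show $n\mapsto DL_n$ is strictly increasing. I would treat $n$ as a real variable $t\ge 1$; a short computation gives
\[
\frac{d}{dt}\,DL_t \;=\; \frac{1}{t^2(t+1)} + \frac{\ln t}{(t+1)^2},
\]
which is strictly positive for every $t\ge 1$ because $\ln t\ge 0$ there. Hence $DL_t$ is strictly increasing on $[1,\infty)$, and in particular $DL_n < DL_{n+1}$ for all integers $n\ge 1$, which is the claim. Equivalently, one shows $\tfrac1n + \tfrac{\ln n}{n+1}$ is decreasing, which after clearing denominators reduces to the trivial inequality $-\,n^2\ln n < n+1$.

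I expect the one place to be careful is the algebra of the middle step: resolving the absolute values in $lit$ and $\delta_h$ for the pair $(\tfrac1n,1)$ and then verifying that the $\tfrac{n-1}{2n}$ terms cancel so that the expression collapses to $1-\tfrac1n-\tfrac{\ln n}{n+1}$. Everything after that identity is immediate; there is no conceptual obstacle, only bookkeeping.
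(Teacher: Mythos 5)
Your proposal is correct and follows essentially the same route as the paper: obtain a closed form for the equiprobable-to-certainty value as a function of $m=|X|$ and show its derivative is positive. Your expression $1-\tfrac1m-\tfrac{\ln m}{m+1}$ is exactly what the paper's (typo-marred) formula simplifies to, and your derivative $\tfrac{1}{m^2(m+1)}+\tfrac{\ln m}{(m+1)^2}$ equals the paper's $\tfrac{m^2\ln m+m+1}{m^2(m+1)^2}$; you merely supply the derivation of the closed form that the paper omits.
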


\begin{proof}

  Suppose $X$ has $m$ mutually exclusive inferences that are equally likely, i.e. $p_x=\frac{1}{m}$, $\forall x \in X$. The amount of DLITE to reach certainty -- that is, one inference becomes the ultimate outcome $q_1=1$ -- is:

  \begin{equation}
    DL(m) =
  \left(\dfrac{1}{m\left(m+1\right)}-\dfrac{1}{m}\right)\ln\left(m\right)+\dfrac{\frac{1}{m}-m}{2\left(m+1\right)}+\dfrac{3(m-1)}{2m}-
  \end{equation}

  Its derivative is:

  \begin{equation}
  DL'(m) = \dfrac{m^2\ln\left(m\right)+m+1}{m^2\left(m+1\right)^2}
  \end{equation}

  $DL'(m)$ is always positive, decreases when $m$ increases, and approaches zero with an infinite number of inferences, i.e. $DL'_{m \to \infty}(m) = 0$. In other words, $DL(m)$ always increases with the greater number of mutually exclusive and equally likely inferences. It approaches its maximum when $m \to \infty$ where $DL_{m \to \infty}(m) = 1$.

\end{proof}


\begin{theorem}

Given the $dl(p,q)$ function of probability change from $p$ to $q$ in Equation~\ref{eq:dl_x}, for any positive value $x$:

\begin{eqnarray}
  dl(xp, xq) & = & x \cdot dl(p, q)
\end{eqnarray}

\end{theorem}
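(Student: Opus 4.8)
The plan is to reduce the claim to elementary algebra with the natural logarithm, using $\ln(xp) = \ln x + \ln p$, and then to check that the absolute values in $lit$ and $\delta_h$ do not get in the way. Write $F(t) = t(1-\ln t)$ and $G(t) = t^2(1-2\ln t)$, so that $lit(p,q) = |F(p)-F(q)|$, $\delta_h(p,q) = \frac{|G(p)-G(q)|}{2(p+q)}$, and $dl(p,q) = |F(p)-F(q)| - \frac{|G(p)-G(q)|}{2(p+q)}$.

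First I would record the two scaling identities $F(xp) = x\,F(p) - (x\ln x)\,p$ and $G(xp) = x^2\,G(p) - 2(x^2\ln x)\,p^2$, which come from expanding $\ln(xp)$ and collecting terms. Subtracting the analogous identities for $q$ gives $F(xp)-F(xq) = x[F(p)-F(q)] - (x\ln x)(p-q)$ and $G(xp)-G(xq) = x^2[G(p)-G(q)] - 2(x^2\ln x)(p^2-q^2)$.

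The key step is to divide the $G$-difference by the rescaled denominator $2(xp+xq) = 2x(p+q)$ and factor $p^2-q^2 = (p-q)(p+q)$: the term $\frac{2(x^2\ln x)(p^2-q^2)}{2x(p+q)}$ collapses to exactly $(x\ln x)(p-q)$, the same \emph{scale term} that appears in $F(xp)-F(xq)$. Hence these scale terms cancel upon subtraction, and
\[
  [F(xp)-F(xq)] - \frac{G(xp)-G(xq)}{2(xp+xq)} = x\left([F(p)-F(q)] - \frac{G(p)-G(q)}{2(p+q)}\right).
\]
This cancellation is the whole point: $LIT$ by itself is not scale-homogeneous because it carries the term $(x\ln x)(p-q)$, and $\Delta_H$ is designed so that its discount carries the identical term, so the homogeneity of $dl$ is precisely this subtraction.

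What remains is to pass from this signed identity to the version with absolute values, and this sign bookkeeping is the one delicate point. Since $dl$ is symmetric ($dl(p,q)=dl(q,p)$, noted above), I may assume $p \ge q$, whence $xp \ge xq$ as well. On $[0,1]$ both $F$ and $G$ are increasing ($F'(t) = -\ln t \ge 0$ and $G'(t) = -4t\ln t \ge 0$ there), so all four differences $F(p)-F(q)$, $G(p)-G(q)$, $F(xp)-F(xq)$, $G(xp)-G(xq)$ are non-negative; the absolute values may be dropped and the display reads exactly $dl(xp,xq) = x\,dl(p,q)$. The argument only needs $xp, xq \in [0,1]$, which is precisely the setting in which this lemma is used -- the sub-system breakdown and the product/joint-distribution identities, where the scaling factor is itself a probability $p_x \le 1$ -- so the restriction costs nothing there.
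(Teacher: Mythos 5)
Your proposal is correct and follows essentially the same route as the paper's own proof: expand $\ln(xp)=\ln x+\ln p$ in both the $lit$ term and the discount, and observe that the resulting scale terms cancel. Your packaging via $F(t)=t(1-\ln t)$ and $G(t)=t^2(1-2\ln t)$, with the identities $F(xp)-F(xq)=x[F(p)-F(q)]-(x\ln x)(p-q)$ and $\frac{G(xp)-G(xq)}{2x(p+q)}=x\frac{G(p)-G(q)}{2(p+q)}-(x\ln x)(p-q)$, is a cleaner presentation of the same computation (the paper's version carries a couple of typographical slips in the expansion). One place where you go beyond the paper, and usefully so: the paper drops the absolute values after noting only that $p\ge q$ implies $xp\ge xq$, whereas you justify this by the monotonicity of $F$ and $G$ on $[0,1]$, which genuinely requires $xp,xq\le 1$. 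That restriction is not cosmetic --- for example with $p=1$, $q=\tfrac12$, $x=2$ one finds $F(2)-F(1)<0$ while $F(1)-F(\tfrac12)>0$, and the identity with absolute values fails --- so the theorem's blanket ``for any positive value $x$'' is an overstatement, and your observation that the lemma is only ever invoked with a probability weight $x=p_x\le 1$ is exactly the right way to salvage it.
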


\begin{proof}

If $p \ge q$, then $xp \ge xq$.

The amount of dlite for the scaled values from $xp$ to $xq$ is:

\begin{eqnarray*}
  dl(xp, xq)
  & = & \Big(xp (1-\ln xp) - xq (1-\ln xq) \Big) \\
  &   & - \frac{\Big( x^2p^2 (1 - 2\ln{xp}) - x^2q^2 (1-2\ln{xq}) \Big)}{2x(p + q)} \\
  & = & x \Big(p (1-\ln p) - q (1-\ln q) \Big) \\
  &   & - x\frac{\Big( 2p^2 (1 - 2\ln{p}) - q^2 (1-2\ln{q}) \Big)}{2(p + q)} \\
  &   & - (p - q)x\ln{x}) + \frac{2x^2p^2\ln{x} - 2x^2q^2\ln{x}}{2x(p+q)} \\
  & = & x \cdot dl(p,q)
\end{eqnarray*}

If $p<q$, the same can be obtained.

\end{proof}


\end{document}